\documentclass[10pt,conference,letterpaper]{IEEEtran}

\usepackage{amsmath,amssymb,amsthm}
\usepackage{algorithm}
\usepackage{algorithmic}
\usepackage{booktabs}
\usepackage{graphicx}
\usepackage[hidelinks]{hyperref}

\newtheorem{theorem}{Theorem}
\newtheorem{corollary}{Corollary}

\hypersetup{
  pdfauthor={Andrew Bond, I. Khilji},
  pdftitle={Budgeted Admission from Timestamped Capacity Records in Shared Clusters},
  pdfsubject={Distributed, Parallel, and Cluster Computing},
  pdfkeywords={Shared Clusters, Capacity Records, Budgeted Admission},
  pdfcreator={LaTeX}
}

\begin{document}

\title{Budgeted Admission from Timestamped\\
Capacity Records in Shared Clusters}

\author{
    \IEEEauthorblockN{Andrew Bond}
    \IEEEauthorblockA{\textit{Department of Computer Engineering} \\
    \textit{San Jose State University}\\
    San Jose, CA, USA \\
    andrew.bond@sjsu.edu}
    \and
    \IEEEauthorblockN{Syed Izhan Khilji} 
    \IEEEauthorblockA{\textit{Distributed Systems Group} \\
    \textit{TU Wien}\\
    Vienna, Austria \\
    ikhilji@dsg.tuwien.ac.at}
}

\maketitle

\begin{abstract}
A cluster admission gate can receive a capacity record that no longer matches
current schedulability. We study one exogenous binary schedulability unit under
a bound on the joint probability of admission and unavailability. For a
symmetric Markov state, we derive the stationary admission frontier for a fixed
observation delay and apply finite likelihood allocation to a delay-tagged
delivered record. A one-sided exact binomial endpoint calibrates each fitted
policy, and exact enumeration confirms the per-policy confidence target in
every tested parameter cell. A separate deterministic guard charges every
unresolved admission against the realized allowance and bounds
unavailable-state admissions at every prefix under truthful outcome feedback.
An executable admission procedure composes the allocation and guard. The
controlled system binds source sampling, NATS delivery, randomized admission,
Kubernetes objects, and API-derived outcomes in one append-only record.
Independent audit finds no missing event, identity mismatch, or state mismatch
in the completed functional block. A separate CPU request-accounting check
preserves Job and Pod identity across an unschedulable-to-complete transition.
\end{abstract}

\section{Introduction}
\label{sec-introduction}

Shared clusters often receive discretionary Jobs from users who do not
control placement. A guest-side gate can suppress an offer before it reaches
Kubernetes, but the gate may rely on a capacity record sampled by a separate
source and delivered over a networked control path. Stale allocable-capacity
views also arise in provider-side cloud admission~\cite{kerveros}. We study
the guest-side decision, where sample time, delivery time, decision time, and
scheduler outcome are distinct events.

We represent the target as one exogenous binary schedulability unit. A
positive action offers one Job. Available-state admission counts offers made
while the unit is available, while unavailable-state admission consumes a
prescribed probability budget. Both quantities are joint probabilities over
the action and the current state. They are not conditional rates among
admitted Jobs. The action does not affect the modeled state.

Three claims require different evidence. A known stationary model defines a
population optimum within the stationary policy class. A policy fitted from
finite transitions requires a confidence statement for that fitted policy. A
realized run whose outcomes arrive late requires an accounting rule that
remains valid before every label has been received. None of these statements
implies the other two.

Collision-constrained opportunistic access studies the corresponding
likelihood allocation for Markov on-off state, aged observations, and
randomized decisions~\cite{zhao2008periodic,chen2008osa,li2011cognitive}. We
specialize that decision basis to one timestamped schedulability record,
combine it with finite-sample calibration and unresolved-outcome accounting,
and connect each admission to its source sample, NATS record, Kubernetes
objects, and API-derived outcome.

We make three contributions.

\begin{itemize}
\item We specify the event contract from a timestamped capacity sample to an
API-derived outcome. For the symmetric one-unit Markov model, we give a
closed-form stationary fixed-delay benchmark and the finite-cell ordering for
a memoryless policy that consumes one delay-tagged delivered record.
\item We calibrate each fitted fixed-delay policy with a one-sided exact
binomial endpoint~\cite{clopperpearson}. The resulting per-policy statement
holds with the prescribed confidence over calibration data under the stated
model. A separate prefix guard charges every unresolved admission as
unavailable and bounds every realized prefix for arbitrary binary state and
proposal sequences with truthful feedback. We give an executable procedure
that composes the finite-cell allocation and guard.
\item We implement the event contract in a controlled NATS and Kubernetes
path with timestamped records, Jobs, and API-derived Job and Pod outcomes.
The functional audit recomputes policy decisions, prefix accounting, object
identity, and outcome agreement from recorded events. A separate repeated
check verifies an identity-preserving Kubernetes transition under CPU request
accounting.
\end{itemize}

The next section defines the record and outcome contract shared by the
analytical model and the controlled integration study.

\section{Record Contract and System Scope}
\label{sec-context}

\subsection{Admission boundary}

The guest gate decides whether to offer one Job to the host scheduler. It does
not place the Job, reserve capacity, or preempt another tenant. The analytical
state represents one exogenous schedulability unit. The controlled integration
study realizes that state through one run-owned scheduling taint. This signal
provides a known binary state for evaluating the admission path. It does not
represent physical device occupancy or cluster throughput.

\subsection{Timestamped capacity record}

At decision epoch $t$, the policy consumes a record containing a source
identifier, the sampled bit $X_{t-A_t}$, its logical age $A_t$, a source
timestamp, a receive timestamp, and a decision timestamp. The logical age
counts state transitions and selects the conditional law used by the policy.
The wall-clock interval from source sampling to the decision is recorded as
source-to-decision elapsed time. It is analyzed separately from logical age.

NATS request reply and leaf connections transport messages across clients and
NATS systems~\cite{natsrequest,natsleaf}. These mechanisms carry a record but
do not establish when its source state was sampled. Transport elapsed time
therefore cannot substitute for record age. A Kubernetes
\texttt{resourceVersion} identifies the object version returned by a
read~\cite{k8sapi}. The analysis treats this identifier as opaque and never as
a clock.

\subsection{Outcome record}

Only a positive admission creates a Job. The controlled harness releases an
outcome label only after the required Job and Pod conditions hold and the
recorded object identifiers and decision nonce match their predeclared values.
Verified completion maps to an available-state label. Scheduling rejection
caused by the run-owned taint maps to an unavailable-state label. Any other
terminal state, timeout, API failure, or identifier mismatch invalidates the
experimental attempt. The stochastic model treats the capacity process as
exogenous. The prefix theorem requires every released binary label to be
truthful under this contract.

\section{Model and Exact Frontier}
\label{sec-model}

The stationary capacity process is indexed by $t\in\mathbb{Z}$. Its state is
$X_t\in\{0,1\}$, where one denotes an available unit. The action is
$M_t\in\{0,1\}$. We restrict the frontier below to stationary randomized
policies. We define their one-epoch joint probabilities
\begin{equation}
g=\Pr(M_t=1,X_t=1)
\quad\text{and}\quad
\rho=\Pr(M_t=1,X_t=0).
\label{eq-objectives}
\end{equation}
The controller maximizes available-state admission $g$ subject to
$\rho\le\epsilon$, where $0\le\epsilon\le1/2$. The quantity $\rho$ is not
the conditional probability of an unavailable state given admission.

The state follows the symmetric Markov kernel
$\Pr(X_{t+1}=1-i\mid X_t=i)=p$ for $i\in\{0,1\}$ and
$p\in[0,1/2]$. Its marginal is uniform. This condition selects the uniform
stationary mixture at $p=0$, where the chain is not irreducible. The action
does not affect the state.

At epoch $t$, a stationary randomized policy may use the state history
available through $t-D$ and independent randomization. The fixed observation
delay $D$ is a nonnegative integer. Let $Y_t=X_{t-D}$. The probability of
disagreement is
\begin{equation}
q_D=\Pr(X_t\ne Y_t)
=\frac{1-(1-2p)^D}{2}.
\label{eq-q}
\end{equation}
The Markov property gives conditional independence of $X_t$ and the older
history given $Y_t$. Any history-dependent policy induces an effective pair
$\theta_i=\Pr(M_t=1\mid Y_t=i)$ with the same $(g,\rho)$. Hence an optimum can
be chosen to depend only on $Y_t$.

\begin{theorem}[Fixed-delay stationary frontier]
\label{thm-frontier}
For $0<q_D<1/2$, the optimum over the stated stationary policy class first
assigns admission probability to the observation $Y_t=1$. Its action
probabilities are
\begin{equation}
(\theta_1,\theta_0)=
\begin{cases}
(2\epsilon/q_D,0), & 0\le\epsilon\le q_D/2,\\
(1,(2\epsilon-q_D)/(1-q_D)), & q_D/2<\epsilon\le1/2.
\end{cases}
\label{eq-policy}
\end{equation}
The resulting frontier is
\begin{equation}
g^\star(\epsilon,D)=
\begin{cases}
\epsilon(1-q_D)/q_D, & \epsilon\le q_D/2,\\
(1-q_D)/2+
\dfrac{q_D}{1-q_D}(\epsilon-q_D/2),
& \epsilon>q_D/2.
\end{cases}
\label{eq-frontier}
\end{equation}
\end{theorem}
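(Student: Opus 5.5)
The plan is to reduce the problem to a two-variable linear program and solve it with a likelihood-ratio (fractional knapsack) argument. By the reduction stated just before the theorem, it suffices to optimize over pairs $(\theta_1,\theta_0)\in[0,1]^2$ with $\theta_i=\Pr(M_t=1\mid Y_t=i)$. The first step is to compute the joint law of $(X_t,Y_t)$. The marginal of $Y_t=X_{t-D}$ is uniform, and by \eqref{eq-q} the chain disagrees with probability $q_D$ symmetrically in the two states. This gives
\[
\Pr(X_t=1,Y_t=1)=\Pr(X_t=0,Y_t=0)=\tfrac{1-q_D}{2},\qquad
\Pr(X_t=1,Y_t=0)=\Pr(X_t=0,Y_t=1)=\tfrac{q_D}{2}.
\]
Because the randomization is independent of $X_t$ given $Y_t$, the objectives \eqref{eq-objectives} become linear:
\[
g=\tfrac{1-q_D}{2}\theta_1+\tfrac{q_D}{2}\theta_0,\qquad
\rho=\tfrac{q_D}{2}\theta_1+\tfrac{1-q_D}{2}\theta_0 .
\]

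The second step orders the two observation cells by their ratio of gain to cost. Cell $Y_t=1$ has ratio $(1-q_D)/q_D$, and cell $Y_t=0$ has ratio $q_D/(1-q_D)$. When $0<q_D<1/2$ the first ratio strictly exceeds $1$, which strictly exceeds the second. I would then prove optimality of the greedy fill by an exchange argument. Take any feasible pair with $\theta_0>0$ and $\theta_1<1$. Lower $\theta_0$ by $\delta$ and raise $\theta_1$ by $\delta(1-q_D)/q_D$. This keeps $\rho$ fixed and raises $g$ by
\[
\tfrac{\delta}{2}\Bigl(\tfrac{(1-q_D)^2}{q_D}-q_D\Bigr)>0.
\]
So an optimum has $\theta_0=0$ or $\theta_1=1$. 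Since $g$ is increasing in both coordinates, the constraint must also bind, $\rho=\epsilon$, unless both coordinates equal $1$. That case cannot occur, because $\theta=(1,1)$ gives $\rho=1/2\ge\epsilon$, with equality only at the endpoint, which \eqref{eq-policy} already covers.

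The third step solves the binding constraint on each branch of \eqref{eq-policy}.
\begin{itemize}
\item If $\theta_0=0$, then $\rho=q_D\theta_1/2=\epsilon$ gives $\theta_1=2\epsilon/q_D$. This is feasible exactly when $\epsilon\le q_D/2$, and it yields $g=\epsilon(1-q_D)/q_D$.
\item If instead $\epsilon>q_D/2$, the first branch is infeasible, so $\theta_1=1$. Then $q_D/2+(1-q_D)\theta_0/2=\epsilon$ gives $\theta_0=(2\epsilon-q_D)/(1-q_D)$. This lies in $[0,1]$ for $\epsilon\le1/2$. Substituting it gives $g=(1-q_D)/2+\frac{q_D}{1-q_D}(\epsilon-q_D/2)$, matching \eqref{eq-frontier}.
\end{itemize}

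The main point needing care is the reduction itself and the joint-law computation, not the LP. The reduction needs the fact that $(g,\rho)$ depend on the policy only through $\theta$. This follows from the conditional independence of $X_t$ and the older history given $Y_t$, and from the independence of the randomization, both of which the paper states. The joint law needs the marginal of $X_{t-D}$ to be uniform and the symmetric $D$-step flip probability to equal $q_D$, which the paper also states. The rest is routine verification.
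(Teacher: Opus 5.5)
Your proposal is correct and follows essentially the same route as the paper. Both reduce to the pair $(\theta_1,\theta_0)$, write $g$ and $\rho$ as the same linear forms, rank the cells by $(1-q_D)/q_D > q_D/(1-q_D)$, fill $\theta_1$ before $\theta_0$, and substitute; your $\rho$-preserving exchange and binding-budget argument just spell out the greedy step the paper asserts in one line, mirroring its exchange in the proof of Theorem~\ref{thm-binary}.
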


\begin{proof}
Stationarity and symmetry give
\begin{align}
g&=\tfrac12[(1-q_D)\theta_1+q_D\theta_0],\\
\rho&=\tfrac12[q_D\theta_1+(1-q_D)\theta_0].
\end{align}
The available-state to unavailable-state mass ratio is $(1-q_D)/q_D$ after
$Y_t=1$ and its reciprocal after $Y_t=0$. The first ratio is larger. The
two-variable linear program increases $\theta_1$ until it reaches one, then
increases $\theta_0$. Direct substitution gives \eqref{eq-policy} and
\eqref{eq-frontier}.
\end{proof}

This frontier is the closed-form two-cell specialization of the established
collision-constrained likelihood allocation reviewed in
Section~\ref{sec-related}.

When $q_D=0$, the minimal-budget boundary optimum
$(\theta_1,\theta_0)=(1,0)$ obtains $g=1/2$ with $\rho=0$. When
$q_D=1/2$, the observation is uninformative and
$g^\star=\epsilon$. The formula in \eqref{eq-frontier} extends continuously
to this boundary.

\begin{theorem}[Finite observation likelihood ratio rule]
\label{thm-binary}
Let $O_t$ be any observation in a finite alphabet $\mathcal O$. Define
$v_o=\Pr(O_t=o,X_t=1)$ and $w_o=\Pr(O_t=o,X_t=0)$. Among policies based on
$O_t$, an optimum fully admits every cell with $w_o=0<v_o$. It then admits
positive-cost cells in nonincreasing order of $v_o/w_o$ until the budget is
exhausted or every cell with $v_o>0$ is fully admitted. When the ratios are
distinct, at most one positive-cost cell is partially admitted. Equal-ratio
boundary cells may share the randomized mass. A cell with $v_o=0$ need not be
admitted. If $w_o=v_o=0$, its action has no effect.
\end{theorem}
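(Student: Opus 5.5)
We reduce the statement to a fractional knapsack linear program, exactly as in the proof of Theorem~\ref{thm-frontier}, but over a general finite alphabet. A policy based on $O_t$ (with independent randomization) is summarized by $\theta_o=\Pr(M_t=1\mid O_t=o)\in[0,1]$ for cells with $\Pr(O_t=o)>0$. Because randomization is independent of $X_t$ given $O_t$, we obtain
\begin{equation*}
g=\sum_{o\in\mathcal O} v_o\theta_o,
\qquad
\rho=\sum_{o\in\mathcal O} w_o\theta_o .
\end{equation*}
The problem is therefore to maximize $\sum_o v_o\theta_o$ subject to $\sum_o w_o\theta_o\le\epsilon$ and $0\le\theta_o\le1$. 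If $v_o=w_o=0$, then $\theta_o$ enters neither sum, which gives the last sentence of the statement. If $v_o=0<w_o$, then setting $\theta_o=0$ weakly improves any feasible point, because it frees budget without losing objective; such cells need not be admitted.

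For the free cells with $w_o=0<v_o$, raising $\theta_o$ to one increases $g$ and leaves $\rho$ unchanged. Every optimum therefore sets these to one; otherwise it could be improved strictly. What remains is a standard fractional knapsack over the positive-cost cells, with value $v_o$, weight $w_o$, and capacity $\epsilon$.

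The core step is the exchange argument. Order the positive-cost cells with $v_o>0$ so that $v_o/w_o$ is nonincreasing, and define the greedy solution: fill cells in this order until the remaining budget is smaller than the next $w_o$, admit that cell fractionally with the remainder, and set the rest to zero. The greedy solution also stops early if every cell is full. Now take any feasible $\theta$ in which some cell $a$ has a higher ratio than a cell $b$, yet $\theta_a<1$ and $\theta_b>0$. Moving budget $\delta$ from $b$ to $a$ means increasing $\theta_a$ by $\delta/w_a$ and decreasing $\theta_b$ by $\delta/w_b$. This keeps $\rho$ fixed and changes $g$ by $\delta(v_a/w_a-v_b/w_b)\ge0$. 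Repeating such moves, which terminates because the alphabet is finite, transforms any optimum into greedy form without decreasing $g$. Hence greedy is optimal.

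The same exchange gives a strict improvement whenever the ratios differ, so with distinct ratios every optimum has at most one fractional positive-cost cell. Two fractional cells $a,b$ would admit a strict exchange, since both can move in either direction. Equal-ratio cells, by contrast, admit zero-gain exchanges; their shared mass can therefore be split arbitrarily, which is the boundary-sharing clause. Finally, if greedy never uses the whole budget, every cell with $v_o>0$ is full, which gives the stated alternative stopping condition.

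The main obstacle is bookkeeping rather than depth. The argument must cleanly separate the degenerate cells ($w_o=0$ or $v_o=0$) so that the ratio $v_o/w_o$ is only ever formed with $w_o>0$. It must also make the ``at most one partial cell'' claim precise: it holds for every optimum when ratios are distinct, and not merely for some optimum. I would handle the partial-cell claim directly by contradiction using the two-cell exchange above. For the reduction from arbitrary $O_t$-based randomized policies to the vector $\theta$, I would argue exactly as in the text preceding Theorem~\ref{thm-frontier}.
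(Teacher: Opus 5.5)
Your proposal is correct and follows the paper's route: the same linear program in $\theta_o$, and the same pairwise exchange showing that no optimum has a higher-ratio cell below one while a lower-ratio cell is above zero, with the degenerate cells treated separately. Your additions (the explicit greedy construction, the budget-slack alternative, and the every-optimum form of the at-most-one-partial-cell claim) fill in what the paper compresses into ``zero-cost and tied cells give the stated boundary cases.'' You also do not need the repeat-until-termination step: a strict exchange at an optimum already contradicts optimality, and existence of an optimum follows from compactness.
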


\begin{proof}
The policy solves
\begin{equation}
\max_{0\le\theta_o\le1}\sum_{o\in\mathcal O} v_o\theta_o
\quad\text{subject to}\quad
\sum_{o\in\mathcal O} w_o\theta_o\le\epsilon.
\label{eq-binary-lp}
\end{equation}
Consider two positive-cost cells with ratios $r_i>r_j$, where
$\theta_i<1$ and $\theta_j>0$. Moving a sufficiently small unavailable-state
mass from cell $j$ to cell $i$ changes available-state mass by that amount
times $r_i-r_j>0$. An optimum therefore has no such inversion. Zero-cost and
tied cells give the stated boundary cases. If every cell with positive
available-state mass is fully admitted before the constraint binds, the
remaining budget may stay unused.
\end{proof}

The ratio ordering in Theorem~\ref{thm-binary} is the finite-observation
admission specialization of the randomized likelihood ratio rule of Neyman
and Pearson~\cite{neymanpearson}. The Markov assumption supplies the joint
masses in closed form and makes one delayed bit sufficient. A finite-context
model could instead estimate these masses without assuming Markov dynamics.

\begin{corollary}[Delivered-record allocation]
\label{cor-record-delay}
Let the delivered-record delay $A_t$ have finite support
$\mathcal A\subset\mathbb Z_{\ge0}$, with
$\pi_a=\Pr(A_t=a)>0$ for $a\in\mathcal A$. Assume that $A_t$ is independent of
the capacity process. Consider memoryless policies whose only state input is
$O_t=(A_t,X_{t-A_t})$. For $p\in(0,1/2)$, an optimum in this class may be
chosen to admit observed-available cells from smaller delay to larger delay.
If budget remains after these cells, it admits positive-delay
observed-unavailable cells from larger delay to smaller delay. At most one
positive-cost boundary cell needs randomization. Admission in a zero-delay
observed-unavailable cell never increases available-state admission and may be
omitted.
\end{corollary}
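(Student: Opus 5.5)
The plan is to apply Theorem~\ref{thm-binary} to the finite observation $O_t=(A_t,X_{t-A_t})$ with alphabet $\mathcal A\times\{0,1\}$. Everything then reduces to computing the joint cell masses and ordering their likelihood ratios. Because $A_t$ is independent of the capacity process, conditioning on $A_t=a$ leaves the law of $(X_{t-a},X_t)$ unchanged. That law is the fixed-delay pair from \eqref{eq-q} with $D=a$, so
\begin{align}
v_{(a,1)}&=\tfrac12\pi_a(1-q_a), & w_{(a,1)}&=\tfrac12\pi_a q_a,\\
v_{(a,0)}&=\tfrac12\pi_a q_a, & w_{(a,0)}&=\tfrac12\pi_a(1-q_a),
\end{align}
where $q_a=(1-(1-2p)^a)/2$. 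The factor $\tfrac12$ comes from the uniform marginal, and $\pi_a$ cancels in every ratio.

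Next I would order the ratios. For $p\in(0,1/2)$ we have $0<1-2p<1$, so $q_a$ is strictly increasing in $a$, with $q_0=0$ and $q_a<1/2$ for every finite $a$. I would then split the cells into three groups.
\begin{itemize}
\item The cell $(0,1)$ has $w=0<v$. Theorem~\ref{thm-binary} therefore admits it fully, and this is the smallest-delay observed-available cell.
\item For $a>0$, the cell $(a,1)$ has ratio $(1-q_a)/q_a>1$, and this ratio is strictly decreasing in $a$. These cells are admitted next, from smaller delay to larger delay.
\item For $a>0$, the cell $(a,0)$ has ratio $q_a/(1-q_a)<1$, and this ratio is strictly increasing in $a$. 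These cells come after every observed-available cell, from larger delay to smaller delay.
\end{itemize}
The cell $(0,0)$ has $v=0<w$. By the last clause of Theorem~\ref{thm-binary} it need not be admitted, and admitting it only spends budget without increasing $g$. So it may be omitted.

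For the randomization claim, the strict monotonicity of $q_a$ makes all positive-cost ratios distinct. Distinct values among the observed-available cells follow from $(1-q)/q$ being injective. The observed-unavailable values lie strictly below $1$ while the observed-available ones lie strictly above $1$, so no ratio from one group matches a ratio from the other. Theorem~\ref{thm-binary} then says at most one positive-cost cell is partially admitted. The only zero-cost cell with positive $v$ is $(0,1)$, which is fully admitted and needs no randomization.

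The main point needing care is the reduction itself: showing that the joint law of $(O_t,X_t)$ is exactly this mixture of fixed-delay laws. This is where the independence of $A_t$ from the capacity process, together with stationarity, is used. It justifies writing $\Pr(X_{t-a}=i,X_t=j)$ as the $D=a$ pair without any selection bias. Everything after that is ratio monotonicity. The boundary $p=1/2$ is excluded because there all positive-delay ratios tie at one. The boundary $p=0$ is excluded because there all observed-unavailable cells have $v=0$, so the ordering of those cells would no longer be strict.
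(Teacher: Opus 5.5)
Your proposal is correct and follows the paper's proof. Like the paper, you obtain the cell masses $\pi_a(1-q_a)/2$ and $\pi_a q_a/2$ from the independence of $A_t$, order the ratios $(1-q_a)/q_a$ and $q_a/(1-q_a)$ by the monotonicity of $q_a$, invoke Theorem~\ref{thm-binary}, and treat the two zero-delay cells as its zero-cost and zero-gain boundary cases; your explicit check that all positive-cost ratios are distinct (strict monotonicity within each group, separation by the value one across groups) spells out the justification for the single-randomized-cell claim, which the paper leaves implicit.
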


\begin{proof}
For $q_a=[1-(1-2p)^a]/2$, the available-state and unavailable-state masses of
an observed-available cell are
$\pi_a(1-q_a)/2$ and $\pi_a q_a/2$. Their ratio is
$(1-q_a)/q_a$. The corresponding ratio after an observed-unavailable sample is
$q_a/(1-q_a)$. For positive delay, the first ratio exceeds one and decreases
with delay. The second is below one and increases with delay.
Theorem~\ref{thm-binary} gives the stated order. At zero delay, the
observed-available cell has zero unavailable-state mass and is admitted first.
The observed-unavailable cell has zero available-state mass.
\end{proof}

Fixed $D$ is the special case in which every delivered record has age $D$. A
timestamp lets the controller select the appropriate delivered-record cell
for each decision. This result does not optimize a controller that retains
past deliveries. A past delivery can contain a newer state sample than the
current record.

The delay-hidden controller sees the delivered bit but not its record delay. Under the
independence assumption, its disagreement probability is
\begin{equation}
\bar q=\sum_{a\in\mathcal A}\pi_a q_a.
\label{eq-mixture-q}
\end{equation}
Substituting $\bar q$ into Theorem~\ref{thm-frontier} gives the optimum among
memoryless policies based only on that bit. This controller is the delay-hidden
comparison used in Section~\ref{sec-evaluation}.

At $p=0$, every observed state is exact, so admitting observed-available cells
and rejecting observed-unavailable cells is optimal at every delay. At
$p=1/2$, the zero-delay sample remains exact and all positive-delay cells have
equal ratio one.
The monotone delay ordering need not hold when record delay depends on the
capacity process.

For $0<q_D<1/2$, a record-independent randomized policy admits with
probability $2\epsilon$ and obtains $g_{\rm ind}=\epsilon$. The largest
vertical difference from \eqref{eq-frontier} occurs at
$\epsilon=q_D/2$ and equals
\begin{equation}
\max_{0\le\epsilon\le1/2}[g^\star(\epsilon,D)-\epsilon]
=\frac{1-2q_D}{2}
=\frac{(1-2p)^D}{2}.
\label{eq-maxdifference}
\end{equation}
For $0<q_D<1/2$, the difference increases with $\epsilon$ on the first branch
and decreases on the second branch, so its maximum occurs at the branch point.
At a specified budget, the increase is
$g^\star(\epsilon,D)-\epsilon$. Equation~\eqref{eq-maxdifference} instead
maximizes that increase over all admissible budgets.

\begin{figure*}[t]
\centering
\includegraphics[width=0.98\textwidth]{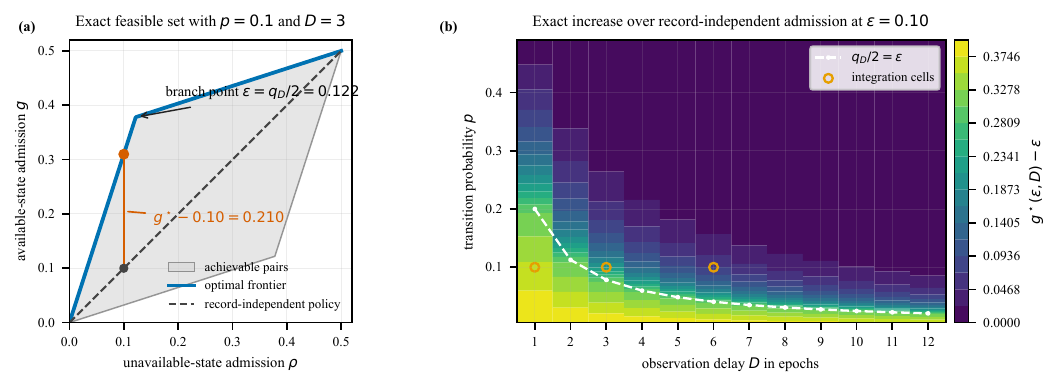}
\caption{Exact quantities from the symmetric Markov model. Panel a shows the
feasible pairs and complete upper frontier for $p=0.10$ and $D=3$. The orange
segment gives the increase over record-independent admission at
$\epsilon=0.10$. Panel b evaluates that increase over integer observation
delays and transition probabilities. The white curve joins the exact branch
points at those integers. The outlined cells identify the fixed delays used in
the controlled integration study. Both panels are analytical evaluations of
the model.}
\label{fig-frontier}
\end{figure*}

\begin{corollary}
For fixed $p\in(0,1/2)$, the delayed-state frontier converges to the
record-independent frontier as $D$ increases.
\end{corollary}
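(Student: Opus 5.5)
The plan is to show uniform convergence on $[0,1/2]$ of $g^\star(\epsilon,D)$ to the record-independent frontier $g_{\rm ind}(\epsilon)=\epsilon$. The argument combines the explicit formula \eqref{eq-frontier} of Theorem~\ref{thm-frontier} with the maximal-difference identity \eqref{eq-maxdifference}.

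First, for fixed $p\in(0,1/2)$ we have $0<1-2p<1$. By \eqref{eq-q}, $q_D=[1-(1-2p)^D]/2$ therefore increases strictly in $D$, satisfies $0<q_D<1/2$ for every $D\ge1$, and converges to $1/2$ as $D\to\infty$. Hence Theorem~\ref{thm-frontier} and \eqref{eq-maxdifference} apply for every $D\ge1$. The case $D=0$ is irrelevant to the limit.

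Second, record-independent admission with probability $2\epsilon$ belongs to the stated stationary policy class. It gives $g^\star(\epsilon,D)\ge\epsilon$ for every $\epsilon\in[0,1/2]$. By \eqref{eq-maxdifference},
\begin{equation*}
0\le \sup_{0\le\epsilon\le1/2}\bigl[g^\star(\epsilon,D)-\epsilon\bigr]=\frac{(1-2p)^D}{2},
\end{equation*}
and the right-hand side decays geometrically to zero. This yields uniform convergence of the frontier to the record-independent frontier, with an explicit exponential rate. In particular, it gives pointwise convergence at every budget.

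The main point to check is that \eqref{eq-maxdifference} is justified as stated. Its derivation relies on the claim that $g^\star-\epsilon$ increases on the first branch and decreases on the second. I would verify this directly from \eqref{eq-frontier}. On the first branch the slope is $(1-q_D)/q_D-1=(1-2q_D)/q_D>0$. On the second branch the slope is $q_D/(1-q_D)-1<0$. At the branch point $\epsilon=q_D/2$ the value is $(1-q_D)/2-q_D/2=(1-2q_D)/2$.

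As an independent sanity check, I would also confirm pointwise convergence by letting $q_D\to1/2$ in each branch of \eqref{eq-frontier}. The first branch is eventually empty except near $\epsilon$ small, where it tends to $\epsilon$. The second branch tends to $1/4+(\epsilon-1/4)=\epsilon$. This matches the continuous extension to $g^\star=\epsilon$ at $q_D=1/2$ noted after Theorem~\ref{thm-frontier}.
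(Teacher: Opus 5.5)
Your argument is correct, but it takes a different route from the paper. The paper argues pointwise. Because $q_D\to1/2$, both branch expressions of \eqref{eq-frontier} tend to $\epsilon$ at each fixed budget, since $(1-q_D)/q_D\to1$ and $q_D/(1-q_D)\to1$. That one-line limit needs no branch-point or monotonicity analysis, but it gives neither uniformity nor a rate.

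You instead trap the gap from both sides. Feasibility of record-independent admission gives $g^\star(\epsilon,D)\ge\epsilon$, and the already-derived identity \eqref{eq-maxdifference} bounds the gap from above. Together these give $\sup_{0\le\epsilon\le1/2}\lvert g^\star(\epsilon,D)-\epsilon\rvert=(1-2p)^D/2$. This is a stronger conclusion: uniform convergence with an exact geometric rate. The price is that you depend on the slope analysis behind \eqref{eq-maxdifference}, and you re-verify that analysis correctly.

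One slip appears in your sanity check. Because $q_D$ increases in $D$, the first-branch region $[0,q_D/2]$ expands toward $[0,1/4)$; it does not become empty except near small $\epsilon$. The value $\epsilon(1-q_D)/q_D$ on that region still tends to $\epsilon$, and the remark carries no weight in your main argument.
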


The corollary follows because $q_D$ converges to $1/2$ and both branches of
\eqref{eq-frontier} converge to $\epsilon$. A non-Markov process can remain
predictable from longer histories even when one delayed bit contains little
information about the current state.

\section{Calibration and Prefix Protection}
\label{sec-policy}

\subsection{Exact finite-sample calibration}

The transition indicators
$I_i=\mathbf{1}\{X_i\ne X_{i-1}\}$ are independent Bernoulli variables under
the symmetric chain. Let $K=\sum_{i=1}^{n}I_i$. Then
$K\sim{\rm Binomial}(n,p)$ for an integer $n\ge1$. The direct controller uses
$\widehat p=\min\{1/2,K/n\}$ in \eqref{eq-q} and
\eqref{eq-policy}. An underestimate can make the fitted policy infeasible.

For $0<\delta<1$, let $U(K)$ be the one-sided Clopper Pearson endpoint
~\cite{clopperpearson}. Its projection to the model interval is
\begin{equation}
p^+(K)=
\begin{cases}
\min\!\left\{\frac12,
B^{-1}_{1-\delta}(K+1,n-K)\right\}, & K<n,\\
\frac12, & K=n,
\end{cases}
\label{eq-cp}
\end{equation}
where $B^{-1}_{u}(a,b)$ is the $u$ quantile of a Beta distribution. Let
$q_D^+=[1-(1-2p^+)^D]/2$. The calibrated controller substitutes $q_D^+$
into \eqref{eq-policy} and uses the exact-record boundary policy when
$q_D^+=0$.

\begin{theorem}[Population budget after calibration]
\label{thm-calibration}
Fix an integer $n\ge1$, an integer $D\ge0$,
$0\le\epsilon\le1/2$, and $0<\delta<1$. For every
$p\in[0,1/2]$, form $K$ from $n$ chain transitions and apply the calibrated
policy to an independent stationary evaluation chain with the same $p$. Then
\begin{equation}
\Pr_{p,K}\!\left[
\Pr_p(M_t=1,X_t=0\mid K)\le\epsilon
\right]\ge1-\delta.
\label{eq-population}
\end{equation}
\end{theorem}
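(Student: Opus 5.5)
The proof reduces the calibrated guarantee to a monotonicity property of the realized budget as a function of the plugged-in disagreement, combined with the exact coverage of the one-sided Clopper Pearson endpoint. Fix $p$ and $D$. For a plug-in value $\tilde q\in[0,1/2]$, let $\theta(\tilde q)$ denote the policy \eqref{eq-policy} evaluated at $\tilde q$, with the exact-record boundary policy $(1,0)$ at $\tilde q=0$. Under the true evaluation chain, the unavailable-state admission is
\[
\rho(\tilde q)=\tfrac12\bigl[q_D\theta_1(\tilde q)+(1-q_D)\theta_0(\tilde q)\bigr].
\]
Conditional on $K$, the evaluation chain is independent of the calibration data, so $\Pr_p(M_t=1,X_t=0\mid K)=\rho(q_D^+(K))$.

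The first step shows that $\rho(\tilde q)\le\epsilon$ whenever $\tilde q\ge q_D$. On the first branch, $\epsilon\le\tilde q/2$, we have $\theta=(2\epsilon/\tilde q,0)$, so $\rho=\epsilon q_D/\tilde q\le\epsilon$. On the second branch, $\theta=(1,(2\epsilon-\tilde q)/(1-\tilde q))$, and
\[
\rho=\tfrac12\bigl[q_D+(1-q_D)(2\epsilon-\tilde q)/(1-\tilde q)\bigr].
\]
Setting $\tilde q=q_D$ gives exactly $\epsilon$. The derivative in $\tilde q$ has the sign of $-(1-q_D)(1-2\epsilon)/(1-\tilde q)^2\le 0$, so $\rho$ is nonincreasing in $\tilde q$. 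The boundary cases need separate checks:
\begin{itemize}
\item If $\tilde q=0$, then $\tilde q\ge q_D$ forces $q_D=0$, and the policy $(1,0)$ yields $\rho=0$.
\item If $\tilde q=1/2$, the formula is continuous and gives $\rho=\epsilon$ when $q_D=1/2$, and $\rho\le\epsilon$ otherwise.
\item If $D=0$, then $q_D=0$ and every plug-in policy admits only after $Y_t=1=X_t$, so $\rho=0$.
\end{itemize}
Finally, the map $p\mapsto q_D(p)$ is nondecreasing on $[0,1/2]$ for $D\ge1$. Hence $p^+(K)\ge p$ implies $q_D^+\ge q_D$, and therefore $\rho\le\epsilon$.

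The second step is the coverage statement $\Pr_p(p^+(K)\ge p)\ge 1-\delta$ for every $p\in[0,1/2]$, where $K\sim\mathrm{Binomial}(n,p)$ by the transition-indicator argument above. The projection to $1/2$ only enlarges the endpoint relative to $p\le1/2$, and $K=n$ maps to $1/2\ge p$. So it suffices to show that the unprojected endpoint $U(K)=B^{-1}_{1-\delta}(K+1,n-K)$ satisfies $\Pr(U(K)<p)\le\delta$. This uses the standard Beta–binomial identity $\Pr(\mathrm{Bin}(n,u)\le k)=\Pr(\mathrm{Beta}(k+1,n-k)>u)$. It shows that $U(k)$ is the value $u$ solving $\Pr_u(K\le k)=\delta$, and that $U$ is increasing in $k$. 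Let $k^\ast$ be the largest $k$ with $U(k)<p$. Then $\{U(K)<p\}=\{K\le k^\ast\}$. Its probability under $p$ is at most its probability under $U(k^\ast)<p$, because the binomial lower tail decreases in $u$, and that probability equals $\delta$. At $p=0$ we have $U(K)\ge 0=p$ trivially.

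The main obstacle is the first step: handling the monotonicity across the branch switch and the degenerate endpoints $q_D^+\in\{0,1/2\}$ and $D=0$ cleanly. For the branch switch, I will either verify the two branches separately, as above, or note that $\rho(\tilde q)$ is continuous at $\tilde q=2\epsilon$. Combining the two steps, the event $\{p^+(K)\ge p\}$ has probability at least $1-\delta$ and implies the inner inequality, which gives \eqref{eq-population}.
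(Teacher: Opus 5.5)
Your proposal is correct and follows essentially the same route as the paper. Coverage of the projected Clopper--Pearson endpoint gives $q_D\le q_D^+$ by monotonicity of \eqref{eq-q}, and feasibility then follows branch by branch, with $\rho=\epsilon q_D/q_D^+$ on the first branch and the boundary cases $q_D^+\in\{0,1/2\}$ handled separately. There are two minor differences. You derive the endpoint's coverage from the Beta--binomial identity, which the paper takes as standard. On the second branch you hold $q_D$ fixed and vary the plug-in value $\tilde q$ with a derivative, whereas the paper fixes $q_D^+$ and notes that $\rho$ is affine and nondecreasing in the true $q_D$, which avoids the derivative.
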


\begin{proof}
The unprojected endpoint has $\Pr_K(p\le U)\ge1-\delta$. Since $p\le1/2$,
projection preserves this event. Equation~\eqref{eq-q} is nondecreasing in
$p$, so coverage gives $q_D\le q_D^+$. If $q_D^+=0$, coverage implies
$q_D=0$ and the exact-record boundary policy has $\rho=0$. Otherwise,
unavailable-state mass on the first policy branch equals
$\epsilon q_D/q_D^+$. On the second branch it is affine and nondecreasing in
$q_D$, and equals $\epsilon$ at $q_D=q_D^+$. The boundary policy at
$q_D^+=1/2$ uses the continuous extension of \eqref{eq-policy}. Every case is
feasible on the covered event.
\end{proof}

Theorem~\ref{thm-calibration} covers the population joint probability of one
fitted stationary policy. Evaluation uses a fresh stationary run or a
disjoint future segment with transition increments that do not overlap the
$n$ calibration increments. The result neither bounds a finite empirical
fraction nor gives simultaneous coverage over a parameter sweep.

\subsection{Deterministic prefix guard}

We next bound realized unavailable-state admissions at every finite prefix.
Let $t\ge1$. Define
$C_t=1-X_t$, $H_0=0$, and
\begin{equation}
H_t=\sum_{s=1}^{t}M_sC_s
\label{eq-harm-count}
\end{equation}
as the realized unavailable-state admission count. Let $B_t$ be a binary base
proposal.
Before action $t$, let $R_t$ and $U_t$ partition the prior admission indices
$\{s<t\mid M_s=1\}$. The set $R_t$ contains admissions with truthful resolved
outcomes. The set $U_t$ contains admissions whose outcomes remain unresolved.
The controller-visible worst-case counter is
\begin{equation}
W_t=\sum_{s\in R_t}C_s+\lvert U_t\rvert.
\label{eq-worst-counter}
\end{equation}
Every unresolved admission is charged as unavailable. Let
$L_t=\lfloor\epsilon t\rfloor$. When $B_t=1$ and $W_t+1\le L_t$, the guard
sets $M_t=1$. In every other case it sets $M_t=0$. Truthful outcomes may
arrive late, out of order, or not at all. A missing outcome remains charged.

\begin{theorem}[Pathwise prefix budget]
\label{thm-shield}
For every binary state sequence, every binary proposal sequence, and every
feedback schedule whose reported outcomes are truthful, the guard satisfies
\begin{equation}
H_t\le\lfloor\epsilon t\rfloor
\qquad\text{for every }t.
\label{eq-prefix}
\end{equation}
\end{theorem}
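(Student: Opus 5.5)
The argument is an induction on $t$ built on one invariant. The controller-visible counter dominates the realized count of past unavailable-state admissions:
\begin{equation*}
W_t\ge H_{t-1}\qquad\text{for every }t\ge1,
\end{equation*}
with $H_0=0$. We prove this invariant first and then use the guard's threshold rule to propagate the prefix bound.

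\emph{Step 1 (domination).} Fix $t$. By definition, $H_{t-1}=\sum_{s<t,\,M_s=1}C_s$. The prior admission indices are partitioned into $R_t$ and $U_t$, so
\begin{equation*}
H_{t-1}=\sum_{s\in R_t}C_s+\sum_{s\in U_t}C_s .
\end{equation*}
For $s\in R_t$, truthfulness means the charged value is exactly the realized $C_s$. For $s\in U_t$, we have $C_s\le1$ because $C_s\in\{0,1\}$. Summing gives $H_{t-1}\le W_t$. This step uses nothing about the order, lateness, or absence of feedback. It uses only that released labels are truthful and that unresolved admissions are charged one. This is exactly where the hypothesis on truthful feedback enters. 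If a label were false, a resolved $s$ with $C_s=1$ could be counted as zero and the inequality would fail.

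\emph{Step 2 (induction).} The base case is $H_0=0\le\lfloor\epsilon\cdot 0\rfloor$. For $t\ge1$, suppose $H_{t-1}\le L_{t-1}$.
\begin{itemize}
\item If $M_t=0$, then $H_t=H_{t-1}\le L_{t-1}\le L_t$, because $\lfloor\epsilon t\rfloor$ is nondecreasing in $t$ for $\epsilon\ge0$.
\item If $M_t=1$, the guard rule forces $W_t+1\le L_t$. Step 1 then gives $H_t\le H_{t-1}+1\le W_t+1\le L_t$.
\end{itemize}
In the second case the inductive hypothesis is not even needed, since the bound comes directly from the counter. Hence $H_t\le\lfloor\epsilon t\rfloor$ for all $t$, for arbitrary state and proposal sequences.

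\emph{Main obstacle.} The only delicate point is Step 1. The sets $R_t$ and $U_t$ are controller-visible and may change arbitrarily between epochs as outcomes arrive, possibly out of order. Domination therefore has to be checked afresh at each $t$ from the partition alone, not carried forward incrementally, and that is why the argument is stated per $t$. The monotonicity of $L_t$ covers the rejection steps, and no probabilistic or model assumption from Theorems~\ref{thm-frontier}--\ref{thm-calibration} is required.
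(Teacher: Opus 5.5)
Your proof is correct, but it is organized differently from the paper's. The paper inducts on the controller-side invariant $W_t\le L_t$. It bounds the realized count through $t$ by $W_t+M_t$, then bounds that by $L_t$ using the guard when $M_t=1$ and the invariant when $M_t=0$. It carries the invariant forward through the counter dynamics: an admission adds one charge, and resolving an old outcome replaces a charge of one by a truthful label. This gives $W_{t+1}\le W_t+M_t\le L_t\le L_{t+1}$.

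You instead prove a static domination $H_{t-1}\le W_t$ at each epoch, using only the partition $R_t\cup U_t$ and truthfulness. You then induct on $H_t$ directly, using the guard threshold at admission epochs and the monotonicity of $L_t$ at rejection epochs.

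What your route buys:
\begin{itemize}
\item It spells out the inequality $H_t\le W_t+M_t$, which the paper asserts in one line.
\item It confines the truthfulness hypothesis to a single step.
\item It uses only the partition at each fixed epoch, never the transition from $(R_t,U_t)$ to $(R_{t+1},U_{t+1})$. It is therefore indifferent to how and when releases move admissions between the sets.
\end{itemize}
What the paper's route buys: it also establishes the counter invariant $W_t\le L_t$, a statement about the counter the implementation actually maintains. Your argument does not yield that invariant, though the theorem as stated does not require it.
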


\begin{proof}
The empty history has $W_1=0\le L_1$. Suppose the pre-action invariant
$W_t\le L_t$ holds. The unavailable-state count through $t$ is at most
$W_t+M_t$. If $M_t=1$, the guard gives $W_t+M_t\le L_t$. If $M_t=0$, the
pre-action invariant gives the same inequality. Charging an admitted action
as unresolved makes the post-action worst-case counter $W_t+M_t$.
Resolving an old outcome replaces
one charged admission by its truthful zero or one label and cannot increase
the counter. Hence $W_{t+1}\le W_t+M_t\le L_t\le L_{t+1}$. Induction proves
both the counter invariant and \eqref{eq-prefix}.
\end{proof}

\begin{theorem}[Pointwise maximal prefix guard]
\label{thm-maximal-shield}
Fix a pre-action history at epoch $t$ and a proposal $B_t=1$. Among binary
pass-through wrappers evaluated at that same history, passing the proposal is
safe for prefix $t$ under every binary completion of $U_t$ and the current
outcome if and only if $W_t+1\le L_t$. The guard is therefore pointwise
maximally permissive for the stated information and completion class.
\end{theorem}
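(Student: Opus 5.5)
The plan is to prove the two directions of the equivalence by writing the realized prefix count through $t$ under an arbitrary binary completion. Fix the pre-action history, so $R_t$, $U_t$, the resolved labels $C_s$ for $s\in R_t$, and $L_t=\lfloor\epsilon t\rfloor$ are all determined. A completion assigns binary values $c_s\in\{0,1\}$ to each $s\in U_t$ and $c_t\in\{0,1\}$ to the current outcome. With all resolved labels truthful, passing the proposal ($M_t=1$) yields
\begin{equation*}
H_t=\sum_{s\in R_t}C_s+\sum_{s\in U_t}c_s+c_t .
\end{equation*}
Here I will use that $H_t$ counts only admitted indices, so rejected epochs contribute nothing.

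For sufficiency, assume $W_t+1\le L_t$. Each $c_s\le1$ and $c_t\le1$, so every completion gives $H_t\le\sum_{s\in R_t}C_s+\lvert U_t\rvert+1=W_t+1\le L_t$. Passing is therefore safe for prefix $t$ under every completion. This is the same bound used in the inductive step of Theorem~\ref{thm-shield}.

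For necessity, assume $W_t+1>L_t$. I will exhibit the adversarial completion $c_s=1$ for all $s\in U_t$ and $c_t=1$. It is a binary completion in the stated class, and it is consistent with truthful feedback because the labels remain unresolved. It gives $H_t=W_t+1>L_t$, which violates \eqref{eq-prefix} at prefix $t$. Since all quantities are integers, $W_t+1>L_t$ is exactly the negation of the guard condition, so the equivalence is sharp.

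The main subtlety, more a point of care than a real obstacle, is the maximality statement. A pass-through wrapper may only output $M_t\in\{0,B_t\}$ as a function of the visible history. By the equivalence, any wrapper that is safe under all completions and passes at some history where $B_t=1$ must have $W_t+1\le L_t$ there, and at exactly those histories the guard also passes. So no safe wrapper passes at a history where the guard rejects. This is pointwise maximal permissiveness. The claim is pointwise at the fixed history and concerns safety for prefix $t$ only, so it needs no future-prefix argument; future safety is already supplied by Theorem~\ref{thm-shield}.
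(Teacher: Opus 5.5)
Your proposal is correct and follows the paper's proof essentially step for step. Sufficiency bounds every unresolved and current label by one to get $H_t\le W_t+1\le L_t$, and necessity uses the all-unavailable completion of $U_t$ and the current outcome to force $H_t=W_t+1>L_t$ for any admitting wrapper, with later prefixes deferred to Theorem~\ref{thm-shield} as in the paper.
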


\begin{proof}
If $W_t+1\le L_t$, the current prefix count is at most $W_t+1\le L_t$
for every completion. Theorem~\ref{thm-shield} covers later prefixes under the
same guard. If the inequality fails, assign the unavailable label to every
unresolved admission and to the current action. The current count is then
$W_t+1>L_t$. Any wrapper that admits violates the bound on this completion.
\end{proof}

\begin{algorithm}[t]
\caption{Timestamped-record admission with prefix protection}
\label{alg-admission}
\footnotesize
\begin{algorithmic}
\STATE Form finite cells $o$ with joint masses $(v_o,w_o)$
\STATE Set $\theta_o\gets0$ for every cell and set $r\gets\epsilon$
\STATE Set $\theta_o\gets1$ for every cell with $w_o=0<v_o$
\STATE Order cells with $v_o>0$ and $w_o>0$ by decreasing $v_o/w_o$
\FOR{each ordered positive-cost cell $o$}
\STATE $\theta_o\gets\min\{1,r/w_o\}$
\STATE $r\gets r-w_o\theta_o$
\ENDFOR
\STATE Set confirmed count $C\gets0$ and unresolved set $U\gets\varnothing$
\FOR{decision epochs $t=1,2,\ldots$}
\FOR{each verified label $(s,c_s)$ released before decision $t$}
\STATE $U\gets U\setminus\{s\}$
\STATE $C\gets C+c_s$
\ENDFOR
\STATE Validate and read $o_t=(A_t,X_{t-A_t})$
\STATE Draw $Z_t$ uniformly and set $B_t\gets\mathbf 1\{Z_t<\theta_{o_t}\}$
\STATE $W_t\gets C+\lvert U\rvert$
\STATE $M_t\gets\mathbf 1\{B_t=1\text{ and }W_t+1\le
\lfloor\epsilon t\rfloor\}$
\IF{$M_t=1$}
\STATE Offer one Job and add its decision identifier to $U$
\ENDIF
\STATE Append the record, draw, decision, counter, and released labels
\ENDFOR
\end{algorithmic}
\end{algorithm}

Algorithm~\ref{alg-admission} composes the finite-cell allocation with the
prefix guard. The first phase implements Theorem~\ref{thm-binary}. A calibrated
fixed-delay policy obtains its cell masses from $p^+$ in
\eqref{eq-cp}. The second phase does not trust those fitted masses. Its prefix
bound follows from Theorem~\ref{thm-shield} for any proposal sequence when the
released binary labels are truthful.

The supplementary model represents binary traces with FIFO outcome resolution
and zero charge at rejected epochs. The Go library uses the same integer
allowance but resolves feedback by action identifier. It serializes counter
updates, rejects fixed-width overflow, and records every release for
independent recomputation. The pathwise theorem applies when the counter
inequalities hold and every released label satisfies the outcome contract in
Section~\ref{sec-context}.

\section{Instrumented Admission Path}
\label{sec-implementation}

\subsection{Controlled integration harness}

The functional check uses a controlled harness. It exercises NATS delivery,
the timestamped-record policy, guard decisions, Kubernetes Job creation,
scheduling, completion, and outcome accounting. The completed guarded
treatments have zero allowance and create no Job. Kubernetes outcomes in this
block therefore come from the unguarded treatments. The harness calls the
Kubernetes client directly, so its measurements concern this controlled path
rather than the general request bridge.

\begin{figure*}[t]
\centering
\includegraphics[width=0.99\textwidth]{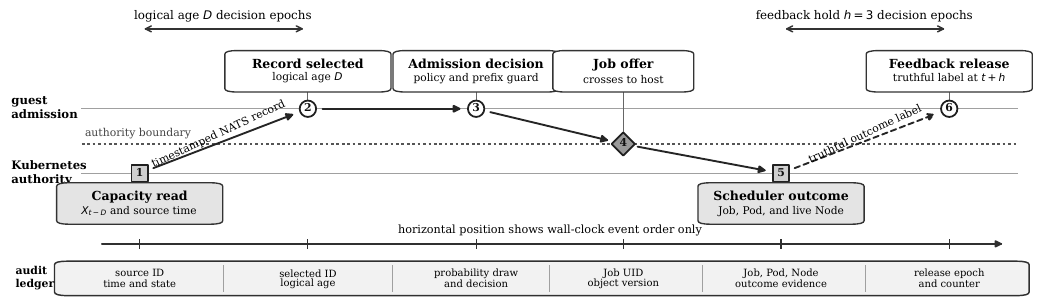}
\caption{Controlled event contract. The upper and lower lanes separate guest
admission from host scheduler authority. Numbered events follow wall-clock
order. The brackets define logical record age and the injected
controller-visible feedback hold in decision epochs. They do not represent
transport latency. Ledger fields bind the source record, decision, Kubernetes
objects, outcome label, and counter update.}
\label{fig-system-path}
\end{figure*}

One runner controls a dedicated single-node development cluster. Namespaced
objects are isolated in an experiment namespace. The Node taint is
cluster-scoped. At each logical epoch the runner applies or removes one
run-specific value under a dedicated \texttt{NoSchedule} taint key. After a
100 ms settling interval, the runner reads the Node through the API, records
the returned taint state, UID, and opaque \texttt{resourceVersion}, then
assigns a local sample timestamp immediately after that
read~\cite{k8staints,k8sapi,k8sobjects}. The analysis never parses
\texttt{resourceVersion} or treats it as a clock.

After each API read, the runner stores the current sample in a local record
buffer. For a treatment with logical age $A_t$, it selects the entry whose
trace index is $t-A_t$ and publishes that selected entry to the policy subject.
NATS therefore carries an intentionally old timestamped record. The age $A_t$
is injected by trace-index selection and is neither produced nor estimated
from transport latency. Distinct NATS connections publish and consume this
policy-facing record. The delivery contains the source identifier, sampled
state, logical record age in epochs, source time, publish time, and receive
time. The policy records its admission probability, common random draw,
proposal, and decision time. Logical record age and source-to-decision elapsed
time remain separate quantities.

A positive action creates a Kubernetes Job. Its Pod uses a hostname
\texttt{nodeSelector} for the selected node and has no toleration for the
experiment taint. The taint therefore prevents scheduling in the controlled
unavailable state~\cite{k8staints}. The runner holds that state until it records
Job completion or a taint-caused scheduling rejection, then deletes the Job
before the next state transition.

The outcome classifier reads the Job and Pod API objects together with the
predeclared taint identity and a decision-specific nonce. Both accepted
outcomes require the exact Job annotation, Job UID, Pod UID, and Pod
owner reference. A verified completion also requires a true
\texttt{Complete} Job condition, a true \texttt{PodScheduled} condition,
container exit code zero, and a Kubernetes-populated termination
message~\cite{k8sjobs,k8spods,k8stermination}. The harness accepts that message
only when it equals the predeclared nonce. A controlled unavailability outcome
requires \texttt{PodScheduled} to be false with reason
\texttt{Unschedulable}. The classifier then reads the Node again and requires
the same Node UID, a true Ready condition, a schedulable Node, the run-owned
\texttt{NoSchedule} taint, the expected hostname selector, and no matching Pod
toleration. Kubernetes condition text is retained for diagnosis but never
determines the outcome. Any API failure, timeout, terminal Job failure,
nonzero exit, identity mismatch, or other unclassified outcome aborts the
classifier and invalidates the run.

The classifier obtains each admitted action outcome before the next capacity
transition. For a guarded admission, the controller releases that API-derived
label after a prescribed three-epoch hold. Labels due at the same epoch are
released newest first, including any labels released during the final flush.
The completed four-epoch treatments did not admit a guarded Job and therefore
did not exercise this delayed guarded-feedback transition. The pathwise result
covers every truthful release order. Infrastructure failures instead
invalidate the run under the classifier rules above.

All treatments in the functional block use the same state sequence, record
delays, offered requests, and random draws. The block is used for path
coverage rather than policy effect estimation. The append-only writer flushes
and synchronizes each JSON record and refuses to overwrite an existing file.
The analyzer independently recomputes the policy probability, proposal, guard
state, deterministic identifiers, event ordering, and prefix allowance. It
checks the recorded Kubernetes UIDs and computes $M_tX_t$ and
$M_t(1-X_t)$ from the current state. Job completion and taint rejection
separately verify that the controlled state produced the intended Kubernetes
behavior.

\subsection{Repository request gate}

The repository also contains a serialized gate for a general NATS request
bridge. It rejects missing or denied observations before Job creation and
publishes the same status to the canonical subject and reply inbox. The
controlled experiment calls the Kubernetes client directly and does not
measure this bridge.

\section{Evaluation}
\label{sec-evaluation}

\subsection{Evidence roles}

The evaluation separates calculation from execution evidence. The analytical
figures evaluate the stated probability model. The calibration sweep sums the
complete finite-sample distribution. A controlled functional block checks the
record-to-outcome implementation path but is not used to estimate policy
effects. A separate repeated check tests a Kubernetes capacity transition
driven by CPU request accounting.

\subsection{Controlled functional path}

The functional block uses the controlled state process with $p=0.10$ and
$\epsilon=0.10$. It exercises record-independent, fixed-delay,
delay-hidden, delay-tagged, and calibrated policies with and without the
prefix guard. Fixed record ages are one, three, and six decision epochs.
Variable ages use the same three values. Each of the 16 treatments executes
four evaluation decisions on one shared state path. The resulting 64 decisions
are a functional coverage record rather than an independent-unit performance
study. We therefore report event integrity and local path timing but no policy
effect estimate. For every guarded treatment, the allowance
$\lfloor\epsilon t\rfloor$ is zero at all four decisions. The 32 guarded
decisions therefore produce no action or delayed outcome release. All positive
actions in this block come from unguarded treatments.

The independent audit found all 64 policy deliveries and all 64 decisions.
Twenty-six positive actions produced 26 classified Kubernetes outcomes.
Twelve Jobs completed in the controlled available state and 14 Pods were
unschedulable under the run-owned taint. There was no missing delivery,
outcome, or API result. The audit found no identifier, record-age,
state-to-outcome, or event-order mismatch. Recomputed guard decisions and
prefix allowances matched the record. Zero prefix excess follows from zero
guarded actions and is not a positive-action guard test. NATS
publish-to-receive time had median 1.593 ms and type-seven empirical 95th
percentile 2.554 ms. These local timings describe the exercised port forward
and are not multi-host latency measurements.

\subsection{Kubernetes request accounting check}

We also tested one capacity transition produced by Kubernetes CPU request
accounting~\cite{k8sresources}. A locally frozen protocol ran 32 sequential
cycles on one Ready untainted node. Existing Pods requested 960 mCPU from
18000 mCPU allocatable. A sleeping holder requested 17015 mCPU and left
25 mCPU, so a 50 mCPU target did not fit. Kubernetes reported
\texttt{PodScheduled=False} with reason \texttt{Unschedulable} and message
\texttt{Insufficient cpu} in every cycle.

After verified holder deletion, the same Job UID and the same Pod UID completed
in all 32 cycles. Every completion had exit code zero and the exact termination
nonce. The background request inventory returned to its initial identities and
request total. Holder deletion to observed completion had median 3.573 s and
range 2.967 to 3.625 s. This interval includes scheduling, container startup,
execution, API observation, and polling. The sleeping holder generated no
synthetic CPU work. The check concerns request accounting and object lifecycle,
not CPU utilization, throughput, or interference.

\begin{table*}[t]
\centering
\caption{Audited execution evidence. The functional block checks every record
and decision together with the unguarded Kubernetes outcome path. Its guarded
arms cover zero-allowance decisions only. The CPU request check repeats one
identity-preserving scheduler transition on the same node.}
\label{tab-execution}
\small
\begin{tabular}{@{}p{0.48\textwidth}p{0.44\textwidth}@{}}
\toprule
Check & Result \\
\midrule
\multicolumn{2}{@{}l}{\emph{Controlled functional path}} \\
Evaluation decisions and delivered records & 64 and 64 \\
Positive actions and classified outcomes & 26 and 26 \\
Completed and controlled-unschedulable outcomes & 12 and 14 \\
Guarded decisions, actions, and delayed releases & 32, 0, and 0 \\
Missing delivery, outcome, and API result & 0, 0, and 0 \\
Identifier, record-age, state, and timing mismatch & 0, 0, 0, and 0 \\
Maximum guarded prefix excess & 0 under zero allowance \\
NATS publish-to-receive time & 1.593 ms median and 2.554 ms empirical p95 \\
\midrule
\multicolumn{2}{@{}l}{\emph{CPU request-accounting lifecycle check}} \\
Allocatable, baseline, holder, residual, and target request &
18000, 960, 17015, 25, and 50 mCPU \\
\texttt{Insufficient cpu} observed & 32 of 32 cycles \\
Same Job UID and Pod UID completed & 32 of 32 cycles \\
Background request inventory and nonce verified & 32 of 32 cycles \\
Failed cycle & 0 \\
Holder deletion to observed completion &
3.573 s median and 2.967 to 3.625 s range \\
\bottomrule
\end{tabular}
\end{table*}

\subsection{Exact calibration sweep}

The functional path evaluates one calibrated operating point. We separately
compute the calibration distribution without Monte Carlo sampling. The sweep
uses
$p\in\{0.005,0.01,0.02,0.05,0.1,0.2,0.35,0.49\}$,
$D\in\{1,3,8,12\}$,
$\epsilon\in\{0.02,0.05,0.1,0.2\}$, and seven calibration sizes from 50 to
5,000. Each size has 128 parameter cells. We compare the direct estimate, a
matched one-sided Hoeffding endpoint, and \eqref{eq-cp} with $\delta=0.01$.

\begin{figure*}[t]
\centering
\includegraphics[width=0.98\textwidth]{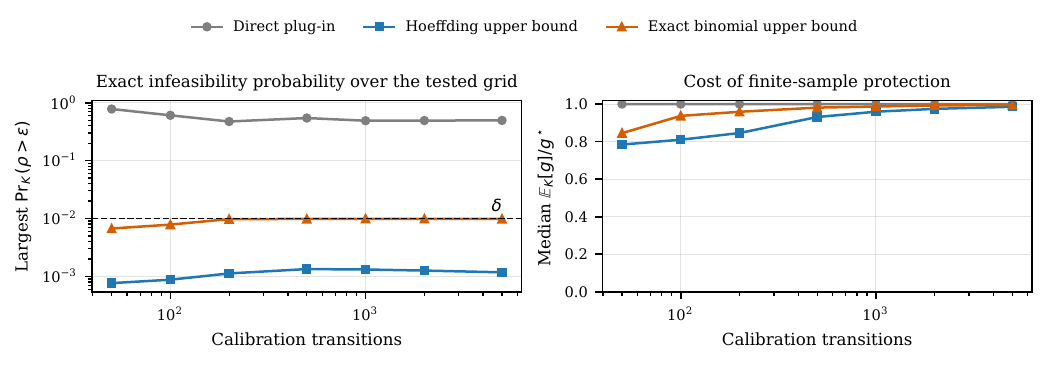}
\caption{Exact finite-sample calculation over the tested parameter grid. The
left panel reports the largest probability that a fitted population policy
violates its budget. The right panel reports the median ratio between expected
calibrated-policy admission and oracle admission. Expectation is taken over the
exact binomial calibration distribution within each cell. The direct rule can
exceed one because an infeasible fitted policy can exceed the feasible oracle.}
\label{fig-calibration}
\end{figure*}

At $n=2000$, the direct rule has a largest violation probability of 0.4916.
The Hoeffding endpoint reduces this value to 0.00126. The exact endpoint gives
0.00986 and remains below 0.01 at every tested calibration size. At $n=2000$,
the median ratio of expected calibrated admission to oracle admission is
0.9745 for the Hoeffding endpoint and 0.9927 for the exact endpoint. The
corresponding tenth percentiles are 0.5041 and 0.8265. The exact endpoint
retains more admission while meeting the per-policy confidence target
throughout the tested grid.

\subsection{Reproducibility checks}

The functional JSONL record is append only. The audit binds its digest and
independently recomputes record selection, policy probability, proposal,
counter transition, prefix allowance, object identity, event order, and
state-to-outcome agreement. The CPU request analyzer checks exact Job, Pod,
holder, Node, image, nonce, and request-inventory identities across every
cycle. It rejects a changed scheduler cause, object replacement, request
arithmetic error, background inventory change, timeout, or image drift.

Automated checks cover policy arithmetic, the exact endpoint, finite-cell
ordering, the serialized request gate, the outcome classifier, the prefix
counter, event-path auditing, request accounting, and deterministic figure
generation. No incomplete randomized-block artifact contributes to a reported
number or figure.

\section{Related Work}
\label{sec-related}

\textbf{Cloud admission and scheduling.}
Kerveros performs provider-side cloud admission from allocable VM counts and
documents stale cached views and optimistic concurrency~\cite{kerveros}. Borg
represents operator-side cluster scheduling~\cite{borg}, while KUBEDIRECT
reduces Kubernetes API exchange during burst provisioning~\cite{kubedirect}.
Our gate acts on the guest side before one Job offer. It neither packs
resources nor replaces the scheduler.

\textbf{Delayed state decisions.}
Delayed network state has long been studied in Markov control and wireless
scheduling~\cite{brooks1972,ying2011delayed}. Delayed queue admission also
predates this work~\cite{kuri1995delayed,altman1995monotonic}. The direct
mathematical antecedent is collision-constrained access with aged or erroneous
Markov observations~\cite{zhao2008periodic,chen2008osa,li2011cognitive}. Li et
al.\ derive memoryless access under tight collision constraints. Our
fixed-delay result is the one-unit closed form after converting conditional
collision probability to joint unavailable-state mass. The action does not
change the modeled state and the objective is not queue stability.

\textbf{Record age and calibration.}
Age of information is the time since generation of the newest delivered
update~\cite{aoi}. We use the generation time and logical age of the record
consumed by one decision. That record need not be the newest retained update.
Clopper and Pearson give exact binomial confidence limits
~\cite{clopperpearson}. Learn then Test gives finite-sample risk calibration
for general predictive algorithms through multiple testing
~\cite{angelopoulos2025ltt}. Our construction controls one fitted policy under
the symmetric-chain model and gives no simultaneous sweep guarantee.

\textbf{Delayed outcome protection.}
Runtime shields enforce modeled safety properties~\cite{bloem2015shield}.
Delay-resilient shields have been synthesized for delayed input observations
~\cite{canocordoba2023}. Our guard instead receives an outcome after the
action. It keeps every unresolved admitted outcome charged and bounds a
cumulative binary prefix. Its maximality statement is pointwise for one
pre-action counter under arbitrary binary completions.

\textbf{Kubernetes event evidence.}
Sieve and Kivi test Kubernetes controller observations and event
interleavings~\cite{sieve,kivi}. Our evidence does not verify Kubernetes. It
checks identity, order, and outcome agreement for one admission path from the
recorded client view.

\section{Limitations}
\label{sec-limitations}

The analytical frontier applies to one exogenous symmetric binary Markov unit
and stationary policies. It does not establish optimality for asymmetric,
multilevel, nonstationary, or action-dependent capacity. The calibration
theorem additionally requires independent Bernoulli transition indicators. If
the transition model is wrong, its population confidence statement need not
hold.

The variable-delay result covers memoryless policies that consume only the
current delivered pair. A controller that retains prior deliveries can have a
newer sample than the record currently delivered. The experiment samples
delivered-record delays independently. This distribution need not be the age
process seen by a receiver that always retains its freshest update. The
ordering also requires record delay to be independent of the capacity process.

The deterministic prefix guard requires no stochastic state model, but it can
reject available-state actions. An unresolved admission consumes allowance
until a truthful binary label is released. The pathwise result concerns an
abstract binary counter and truthful labels. It does not establish correctness
of the Go runtime, NATS, Kubernetes, or the correspondence between an API
condition and physical capacity.

The integration experiment controls schedulability through one
\texttt{NoSchedule} taint on one Docker Desktop node. It includes no incumbent
work, physical accelerator occupancy, queue competition, multi-unit placement,
or multi-tenant production load. A Pod held pending by the taint represents a
controlled unavailable-state action. It is not evidence of physical
interference or a tenant policy violation.

The functional block has four decisions per treatment on one shared path. It
tests event coverage and audit consistency but cannot estimate policy effects,
long-run admission, or retained admission under the prefix guard. Its guarded
arms have zero allowance at every tested decision. No guarded Job or delayed
guarded-feedback transition is exercised.

Capacity advances in logical decision epochs after the current action outcome
has been recorded. Treatments that create more Jobs therefore take more wall
time. The reported $g$ and $\rho$ values are per decision epoch rather than
cluster throughput. The NATS broker is reached through a local Kubernetes port
forward, so the study does not estimate multi-host transport latency.

The separate CPU request-accounting check uses a sleeping holder Pod on the
same node. It verifies one pending-to-complete lifecycle transition under
Kubernetes request accounting. It does not exercise NATS, an admission policy,
or the prefix guard. The holder generates no synthetic CPU work, so the check
does not measure utilization, interference, or physical load.

The audit verifies identifier equality and event order as observed by one
client. It treats \texttt{resourceVersion} as opaque and does not infer a
global Kubernetes event order from object versions, creation timestamps, or
condition transition timestamps~\cite{k8sapi}.

\section{Conclusion}
\label{sec-conclusion}

Timestamped capacity records make observation delay an explicit policy input.
For one symmetric binary Markov unit, the stationary fixed-delay frontier gives
an exact matched-budget benchmark. With finite transition data, a one-sided
binomial endpoint gives the stated population confidence for each fitted policy
under the model. A separate prefix guard bounds unavailable-state admissions
at every prefix while truthful outcomes remain unresolved. The admission
algorithm composes the finite-cell allocation with this guard. The completed
functional block audits every delivered record and decision. Its unguarded
positive actions connect the source sample, randomized decision, Job, Pod, and
verified outcome with no audited mismatch. Its guarded arms cover only
zero-allowance decisions. The repeated CPU request-accounting check preserves
Job and Pod identity across the tested scheduler transition.
Evaluation on uncontrolled multi-node capacity will require a capacity label
independent of the admission action and validation against a stated
operational objective.

\bibliographystyle{IEEEtran}
\bibliography{references}

@article{kuri1995delayed,
  author = {Kuri, Joy and Kumar, Anurag},
  title = {Optimal Control of Arrivals to Queues with Delayed Queue Length
           Information},
  journal = {IEEE Transactions on Automatic Control},
  year = {1995},
  month = aug,
  volume = {40},
  number = {8},
  pages = {1444--1450},
  publisher = {IEEE},
  issn = {0018-9286},
  doi = {10.1109/9.402238},
  url = {https://doi.org/10.1109/9.402238}
}

@article{altman1995monotonic,
  author = {Altman, Eitan and Stidham, Jr., Shaler},
  title = {Optimality of Monotonic Policies for Two-Action Markovian Decision
           Processes, with Applications to Control of Queues with Delayed
           Information},
  journal = {Queueing Systems},
  year = {1995},
  month = sep,
  volume = {21},
  number = {3--4},
  pages = {267--291},
  publisher = {Springer},
  issn = {0257-0130},
  doi = {10.1007/BF01149165},
  url = {https://doi.org/10.1007/BF01149165}
}

@inproceedings{kerveros,
  author = {Sultan Mahmud Sajal and Luke Marshall and Beibin Li and
            Shandan Zhou and Abhisek Pan and Konstantina Mellou and
            Deepak Narayanan and Timothy Zhu and David Dion and
            Thomas Moscibroda and Ishai Menache},
  title = {Kerveros: Efficient and Scalable Cloud Admission Control},
  booktitle = {17th USENIX Symposium on Operating Systems Design and
               Implementation},
  pages = {209--226},
  year = {2023},
  month = jul,
  address = {Boston, MA},
  isbn = {978-1-939133-34-2},
  publisher = {USENIX Association},
  url = {https://www.usenix.org/conference/osdi23/presentation/sajal}
}

@article{ying2011delayed,
  author = {Lei Ying and Sanjay Shakkottai},
  title = {On Throughput Optimality with Delayed Network-State Information},
  journal = {IEEE Transactions on Information Theory},
  volume = {57},
  number = {8},
  pages = {5116--5132},
  year = {2011},
  month = aug,
  doi = {10.1109/TIT.2011.2158879},
  url = {https://doi.org/10.1109/TIT.2011.2158879}
}

@article{canocordoba2023,
  author = {Cano C{\'o}rdoba, Filip and Alexander Palmisano and
            Martin Fr{\"a}nzle and Roderick Bloem and
            Bettina K{\"o}nighofer},
  title = {Safety Shielding under Delayed Observation},
  journal = {Proceedings of the International Conference on Automated
             Planning and Scheduling},
  volume = {33},
  number = {1},
  pages = {80--85},
  year = {2023},
  doi = {10.1609/icaps.v33i1.27181},
  url = {https://ojs.aaai.org/index.php/ICAPS/article/view/27181}
}

@article{angelopoulos2025ltt,
  author = {Anastasios N. Angelopoulos and Stephen Bates and
            Emmanuel J. Cand{\`e}s and Michael I. Jordan and Lihua Lei},
  title = {Learn then Test: Calibrating Predictive Algorithms to Achieve
           Risk Control},
  journal = {The Annals of Applied Statistics},
  volume = {19},
  number = {2},
  pages = {1641--1662},
  year = {2025},
  doi = {10.1214/24-AOAS1998},
  url = {https://doi.org/10.1214/24-AOAS1998}
}

@article{brooks1972,
  author = {D. M. Brooks and C. T. Leondes},
  title = {Technical Note---{Markov} Decision Processes with State-Information
           Lag},
  journal = {Operations Research},
  volume = {20},
  number = {4},
  pages = {904--907},
  doi = {10.1287/opre.20.4.904},
  year = {1972}
}

@inproceedings{borg,
  author = {Abhishek Verma and Luis Pedrosa and Madhukar R. Korupolu and
            David Oppenheimer and Eric Tune and John Wilkes},
  title = {Large-Scale Cluster Management at {Google} with {Borg}},
  booktitle = {Proceedings of the Tenth European Conference on Computer
  Systems},
  pages = {1--17},
  doi = {10.1145/2741948.2741964},
  year = {2015}
}

@article{aoi,
  author = {Roy D. Yates and Yin Sun and Brown, III, D. Richard and Sanjit K. Kaul
            and Eytan Modiano and Sennur Ulukus},
  title = {Age of Information: An Introduction and Survey},
  journal = {IEEE Journal on Selected Areas in Communications},
  volume = {39},
  number = {5},
  pages = {1183--1210},
  doi = {10.1109/JSAC.2021.3065072},
  year = {2021}
}

@article{clopperpearson,
  author = {C. J. Clopper and E. S. Pearson},
  title = {The Use of Confidence or Fiducial Limits Illustrated in the Case
           of the Binomial},
  journal = {Biometrika},
  volume = {26},
  number = {4},
  pages = {404--413},
  doi = {10.1093/biomet/26.4.404},
  year = {1934}
}

@misc{natsrequest,
  author = {{NATS Authors}},
  title = {Request-Reply},
  howpublished = {\url{https://docs.nats.io/nats-concepts/core-nats/reqreply}},
  note = {Accessed July 2026}
}

@misc{natsleaf,
  author = {{NATS Authors}},
  title = {Leaf Nodes},
  howpublished = {\url{https://docs.nats.io/running-a-nats-service/configuration/leafnodes}},
  note = {Accessed July 2026}
}

@misc{k8staints,
  author = {{The Kubernetes Authors}},
  title = {Taints and Tolerations},
  year = {2025},
  howpublished = {\url{https://v1-34.docs.kubernetes.io/docs/concepts/scheduling-eviction/taint-and-toleration/}},
  note = {Kubernetes v1.34 documentation, accessed July 2026}
}

@misc{k8spods,
  author = {{The Kubernetes Authors}},
  title = {Pod Lifecycle},
  year = {2025},
  howpublished = {\url{https://v1-34.docs.kubernetes.io/docs/concepts/workloads/pods/pod-lifecycle/}},
  note = {Kubernetes v1.34 documentation, accessed July 2026}
}

@misc{k8sjobs,
  author = {{The Kubernetes Authors}},
  title = {Jobs},
  year = {2025},
  howpublished = {\url{https://v1-34.docs.kubernetes.io/docs/concepts/workloads/controllers/job/}},
  note = {Kubernetes v1.34 documentation, accessed July 2026}
}

@misc{k8stermination,
  author = {{The Kubernetes Authors}},
  title = {Determine the Reason for Pod Failure},
  year = {2024},
  howpublished = {\url{https://v1-34.docs.kubernetes.io/docs/tasks/debug/debug-application/determine-reason-pod-failure/}},
  note = {Kubernetes v1.34 documentation, accessed July 2026}
}

@misc{k8sapi,
  author = {{The Kubernetes Authors}},
  title = {Kubernetes {API} Concepts},
  year = {2025},
  howpublished = {\url{https://v1-34.docs.kubernetes.io/docs/reference/using-api/api-concepts/}},
  note = {Kubernetes v1.34 documentation, accessed July 2026}
}

@misc{k8sobjects,
  author = {{The Kubernetes Authors}},
  title = {Object Names and {IDs}},
  year = {2025},
  howpublished = {\url{https://v1-34.docs.kubernetes.io/docs/concepts/overview/working-with-objects/names/}},
  note = {Kubernetes v1.34 documentation, accessed July 2026}
}

@misc{k8sresources,
  author = {{The Kubernetes Authors}},
  title = {Resource Management for Pods and Containers},
  year = {2026},
  howpublished = {\url{https://v1-34.docs.kubernetes.io/docs/concepts/configuration/manage-resources-containers/}},
  note = {Kubernetes v1.34 documentation, accessed July 2026}
}

@article{neymanpearson,
  author = {Jerzy Neyman and Egon S. Pearson},
  title = {On the Problem of the Most Efficient Tests of Statistical
           Hypotheses},
  journal = {Philosophical Transactions of the Royal Society of London.
             Series A},
  volume = {231},
  number = {694--706},
  pages = {289--337},
  doi = {10.1098/rsta.1933.0009},
  year = {1933}
}

@article{chen2008osa,
  author = {Yunxia Chen and Qing Zhao and Ananthram Swami},
  title = {Joint Design and Separation Principle for Opportunistic Spectrum
           Access in the Presence of Sensing Errors},
  journal = {IEEE Transactions on Information Theory},
  volume = {54},
  number = {5},
  pages = {2053--2071},
  doi = {10.1109/TIT.2008.920248},
  year = {2008}
}

@article{zhao2008periodic,
  author = {Qianchuan Zhao and Stefan Geirhofer and Lang Tong and
            Brian M. Sadler},
  title = {Opportunistic Spectrum Access via Periodic Channel Sensing},
  journal = {IEEE Transactions on Signal Processing},
  volume = {56},
  number = {2},
  pages = {785--796},
  doi = {10.1109/TSP.2007.907867},
  year = {2008}
}

@article{li2011cognitive,
  author = {Xin Li and Qianchuan Zhao and Xiaohong Guan and Lang Tong},
  title = {Optimal Cognitive Access of {Markovian} Channels under Tight
           Collision Constraints},
  journal = {IEEE Journal on Selected Areas in Communications},
  volume = {29},
  number = {4},
  pages = {746--756},
  doi = {10.1109/JSAC.2011.110407},
  year = {2011}
}

@inproceedings{bloem2015shield,
  author = {Roderick Bloem and Bettina K{\"o}nighofer and
            Robert K{\"o}nighofer and Chao Wang},
  title = {Shield Synthesis: Runtime Enforcement for Reactive Systems},
  booktitle = {Tools and Algorithms for the Construction and Analysis of
               Systems},
  series = {Lecture Notes in Computer Science},
  volume = {9035},
  pages = {533--548},
  doi = {10.1007/978-3-662-46681-0_51},
  year = {2015}
}

@inproceedings{sieve,
  author = {Xudong Sun and Wenqing Luo and Jiawei Tyler Gu and
            Aishwarya Ganesan and Ramnatthan Alagappan and Michael Gasch and
            Lalith Suresh and Tianyin Xu},
  title = {Automatic Reliability Testing for Cluster Management Controllers},
  booktitle = {16th USENIX Symposium on Operating Systems Design and
               Implementation},
  pages = {143--159},
  year = {2022},
  publisher = {USENIX Association},
  url = {https://www.usenix.org/conference/osdi22/presentation/sun}
}

@inproceedings{kubedirect,
  author = {Sheng Qi and Zhiquan Zhang and Xuanzhe Liu and Xin Jin},
  title = {{KUBEDIRECT}: Unleashing the Full Power of the Cluster Manager for
           Serverless Computing},
  booktitle = {23rd USENIX Symposium on Networked Systems Design and
               Implementation},
  pages = {1341--1357},
  year = {2026},
  publisher = {USENIX Association},
  url = {https://www.usenix.org/conference/nsdi26/presentation/qi}
}

@inproceedings{kivi,
  author = {Bingzhe Liu and Gangmuk Lim and Ryan Beckett and
            P. Brighten Godfrey},
  title = {Kivi: Verification for Cluster Management},
  booktitle = {2024 USENIX Annual Technical Conference},
  pages = {509--527},
  year = {2024},
  publisher = {USENIX Association},
  url = {https://www.usenix.org/conference/atc24/presentation/liu-bingzhe}
}

\end{document}